\newtheorem{myexample}{Example}
\newtheorem{mylemma}{Lemma}
\newtheorem{claim}{Claim} 
\newtheorem{theorem}{Theorem} 
\pgfplotsset{compat=1.10}
\newcommand{\pbDef}[3]{%
\noindent
\begin{center}
\begin{boxedminipage}{0.98 \columnwidth}
#1\\[5pt]
\begin{tabular}{l p{0.75 \columnwidth}}
Input: & #2\\
Question: & #3
\end{tabular}
\end{boxedminipage}
\end{center}
}
 \newcounter{rownumber}[figure] 
\renewcommand{\citet}[1]{\citeauthor{#1}~\shortcite{#1}}
\title{Optimal Kidney Exchange with Immunosuppressants}
\author { 
        Haris Aziz,\textsuperscript{\rm 1}
        \'{A}gnes Cseh,\textsuperscript{\rm 2,3}
        John P.\ Dickerson,\textsuperscript{\rm 4}
				Duncan C.\ McElfresh\textsuperscript{\rm 4}\\
}
\begin{document}

\maketitle

\begin{abstract}
	Algorithms for exchange of kidneys is one of the key successful applications in market design, artificial intelligence, and operations research. 
Potent immunosuppressant drugs suppress the body's ability to reject a transplanted organ up to the point that a transplant across blood- or tissue-type incompatibility becomes possible. In contrast to the standard  kidney exchange problem, we consider a setting that also involves the decision about which recipients receive from the limited supply of immunosuppressants that make them compatible with originally incompatible kidneys. We firstly present a general computational framework to model this problem. Our main contribution is a range of efficient algorithms that provide flexibility in terms of meeting meaningful objectives. Motivated by the current reality of kidney exchanges using sophisticated mathematical-programming-based clearing algorithms, we then present a general but scalable approach to optimal clearing with immunosuppression; we validate our approach on realistic data from a large fielded exchange.  
\end{abstract}

\section{Introduction}

The deployment of centralized matching algorithms for efficient exchange of donated kidneys is a major success story of market design~\citep{BHA19,BKM+19}. 
The theory and practice of kidney exchange have benefited from active research within artificial intelligence \citep[e.g.][]{ABS07a,MO15,MBD19a,MD18a,FDS17}.
The standard model for kidney exchange involves information about recipients' compatibility with kidneys in the market. A recipient can only be given a kidney that is compatible with the recipient. The goal is to enable exchanges of kidneys via a centralized algorithm to satisfy the maximum number of recipients. 

We consider a new kidney exchange model which has an interesting feature that is informed by significant technological advances in organ transplant. The technology concerns immunosuppressants which if given to a recipient can make her receptive to kidneys which she is not receptive to by default~\cite{MLK11}. We will refer to the model as \emph{Kidney Exchange with Immunosuppressants (KEI)}. 
Immunosuppressants (abbreviated as suppressants from here onwards) have been successfully used in Japan and Korea for several years, and increasingly being considered and utilized in other countries~\cite{HHC20a}, even though they are costly and may have side effects. Due to these costs or side effects, it is desirable to match as many recipients to kidneys while minimizing the number of recipients who are given immunosuppressants.

In this paper, the fundamental research problem that we explore is that of designing mechanisms for kidney exchange with suppressants that satisfy desirable computational, incentive and monotonicity properties. A naive way of using suppressants is to clear the classic kidney exchange market without using them and then give suppressant to the recipients who are left. However, there can be more efficient ways of giving suppressant to particular recipients and then implementing exchanges of kidneys to facilitate as many transplants as possible, {as we will demonstrate in Figure~\ref{fig:firstexample}}. At first sight, the two-stage and connected process of using foresight to first giving suppressants to suitable recipients and then finding a matching that satisfies suitable social objectives appears to be a complex problem. We design a flexible algorithmic approach for the problem. 

\paragraph{Contributions}
We formalize a \emph{general model} of KEI that features compatible, half-compatible, and incompatible kidneys, and which allows for allocations as a result of multi-way exchanges. We then initiate a computational study of kidney exchange with suppressants. Prior mechanism design work on the subject either only allows pairwise exchanges or focuses on a restricted model.

One of our central contributions is modeling 
important KEI problems in terms of an underlying graph with different classes of edges. One of the edge classes represents organ compatibility that is dependent on administering immunosuppressants. Depending on how we set the edge weights in the graph problem, we can find in polynomial time, allocations corresponding to several important objectives. The objectives include maximizing the total number of transplants and given that, maximizing the number of compatible transplants. Among the list of objectives captured by our algorithmic framework, we defer the choice of the exact objective to the policy-makers. 
	
Then, we focus on the problem where there is an \emph{upper bound on the number of suppressants} that can be used. We present a polynomial-time algorithm for maximizing the number of transplants for a restricted model that we refer to as the Silver Bullet model. In the model, once a recipient has been given a suppressant, then the recipient can take any kidney. For our general model in which certain kidneys are inherently incompatible, we show that the problem of maximizing the number of transplants reduces in polynomial-time to an interesting generalized matching problem whose complexity has been open for years. 

Finally, we present a flexible integer linear program (ILP) formulation that allows us to optimize objectives subject to bounds on the length of exchange cycles.  We validate that model on realistic data from a large, fielded kidney exchange in the United States, and show significant gains in the number of matches made even when the central clearinghouse is only able to use a small number of suppressants.

Some of the techniques that we use such as to capture strong individual rationality or handle pairwise exchanges etc. are of independent interest and can be applied to a host of other problems in matching markets. Although we present our model and result in the language of kidney exchange and suppressants, our model and algorithms also apply to any exchange model in which agents have trichotomous preferences~\cite{MW18} and for any half-compatible match to materialize, the social designer needs to use some resource such as money to facilitate such a match. The goal is to implement desirable exchanges subject to minimum use of additional resources.


\section{Related Work}

Kidney exchange is one of the major research topics in matching market design~\cite{ABS07a,AsRo20a,Hatf05a,BMR09a,DPS14,LLM+19,RSU05a,SoUn10a}. In many of the papers, the algorithms only allow exchange cycles of limited size due to logistical and other constraints. In this paper, we first allow exchange cycles of any size, and then discuss the bounded case. Note that for any exchange cycle bounds that are three or more, even the kidney exchange problem in the traditional model without suppressants is NP-hard~\cite{ABS07a}.

The use of suppressants to facilitate more efficient kidney exchange has been discussed in medical circles~(see, e.g. \citet{AOH+18a}). The two market design papers directly relevant to our work are the ones where kidney exchange with suppressants has been mathematically modeled~\cite{HHC20a,AnKr20}. \citet{HHC20a} prove a couple of impossibility results as well as an exchange mechanism with some desirable monotonicity properties. They assume that once suppressants are administered to a recipient, she can take a kidney from any donor. We consider a more general model in which half-compatibility is specific to particular recipient-donor pairs. 

\citet{AnKr20} consider a model more general than that of \citet{HHC20a} in which only certain donor-recipient pairs can be made compatible after giving suppressants to the recipient. They focus on \emph{pairwise} kidney exchange~\cite{RSU05} and demonstrate through experiments that adding suppressant treatment to the pairwise exchange model results in a larger increase in transplant numbers than allowing short cycles.
Considering both cycles and suppressants was discussed as important future work by \citet{AnKr20}. Our paper presents experimental results in this setting. 
		
\section{Model and Concepts}
\label{se:model}

A kidney exchange market is a tuple $(R,D,C,H,I)$ where $R=\{r_1, r_2, \ldots, r_n\}$ is a set of $n$ recipients (agents) and $D$ is the set of donors. Some recipients and donors come in pairs; others come single. 
Generous donors who offer their kidney to the pool instead of a specific recipient in it are called \emph{altruistic donors}. 
 
Each recipient $r_i$ partitions the donors $D$ into sets $C_i$, $H_i$, and~$I_i$. The set $C_i$ is the set of donors whom recipient $r_i$ is compatible with. The set $H_i$ is the set of donors $r_i$ is half-compatible with. Half-compatibility means that if a suppressant is given to $r_i$, then $r_i$ can accept a kidney from any donor in~$H_i$. Donors in the set $I_i$ are incompatible with recipient $r_i$ even if $r_i$ is given a suppressant. These partitions at each recipient form the collections of sets $C=(C_1,\ldots, C_n)$, $H=(H_1,\ldots, H_n)$, and $I=(I_1,\ldots, I_n)$ in the input. 

An \emph{allocation} assigns each recipient $r_i$ at most one donor who is either in $C_i$ or in~$H_i$. Recipients who are assigned a half-compatible donor receive suppressants.

We consider three models.
\begin{enumerate}
	\item \textbf{BM (Baseline model)}: for each $r_i\in R$, $H_i=\emptyset$.
	\item \textbf{SBM (Silver Bullet model)}: for each $r_i\in R$, $I_i=\emptyset$.
	\item \textbf{GM (General model)}.
\end{enumerate}

The baseline model coincides with the traditional kidney exchange model in which suppressants are not considered. SBM is the model in which we assume that if a recipient is given a suppressant then she will be able to receive any kidney in the market~\cite{HHC20a}. 
GM is the general model that also allows for some kidneys being inherently incompatible for a recipient even if she has been given suppressants. Unless specified, we will focus on GM. In some cases, we will present some results that hold for the Silver Bullet model (SBM). The SBM assumption was made by \citet{HHC20a} so we keep it as an important intermediate model between the baseline model and general model.  Except for Theorems~\ref{prop:maxkei} and \ref{prop:$h$-MaxKEI}, all of our results and discussions hold for the general model. 

As far as a recipient or the social designer is concerned, there are two types of preferences. We will treat matching with incompatible donors to be infeasible.

\begin{enumerate}
	\item \textbf{Coarse preferences}: a recipient is indifferent between a compatible donor and a half-compatible donor with a suppressant, and prefers both options over no transplant at all.
	\item \textbf{Refined preferences}: a recipient prefers compatible donors over half-compatible donors along with a suppressant, which are preferred over no transplant at all.
	\end{enumerate}

Coarse preferences have the underlying assumption that a recipient has no significant cost (in terms of money or side-effects) when receiving a half-compatible kidney.
Based on the preference relation one can define concepts such as Pareto optimality. \citet{HHC20a} considered SBM and coarse preferences. They consider refined preferences when defining a monotonicity property. \citet{AnKr20} considered GM and refined preferences. 

A recipient who is assigned a  compatible donor or a half-compatible donor along with a suppressant is referred to as \emph{satisfied}. Our general goal is to satisfy the maximum number of recipients while minimizing the need of suppressants. We will consider the following feasibility condition for all allocations, which captures a natural individual rationality requirement: either a recipient donates her donor's kidney to the market and gets a strict improvement or she and her donated kidney are not part of any allocation. We will refer to this condition as \emph{strong individual rationality (strong-IR).} 
A recipient who enters the market with a half-compatible donor improves her situation if she is assigned to her own or another half-compatible donor along with a suppressant. Also, strong-IR implies that a donor arriving in a pair with a recipient will only donate a kidney if her recipient also receives one. A weaker requirement is \emph{individual rationality (IR)} whereby no recipient whose own donor is compatible ends up with no transplant {or a half compatible kidney}.

 \begin{myexample}
 Consider a kidney exchange problem in which there are three recipients $r_1, r_2, r_3$ with corresponding donors $d_1, d_2, d_3$. No recipient's donor has a kidney compatible with the recipient. Recipient $r_1$ finds the kidney of $d_2$ compatible{, while $d_3 \in H_2$ and $d_1 \in H_3$}. The problem is captured in Figure~\ref{fig:firstexample}.

 If suppressants are not allowed, then no recipient will be able to get a kidney without violating strong-IR. This remains the case if only one suppressant is allowed. Suppose now that the system has 2 suppressants available. In that case, one suppressant can be given $r_2$ and another to~$r_3$. Then $r_1$ can take a compatible kidney of $d_2$, $r_2$ takes a half-compatible kidney of $d_1$ and $r_3$ takes a half-compatible kidney of~$d_3$.
 \end{myexample}
   	    	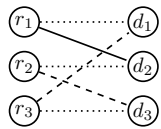
\begin{figure}[h!]
					\begin{center}
					\resizebox{.12\textwidth}{!}{
   	    	\begin{tikzpicture}[-,>=stealth',shorten >=1pt,auto,node distance=3cm,
   	    	        thick,main node/.style={circle,fill=white!20,draw,minimum size=0.5cm,inner sep=0pt]}, scale=0.5]

   	    	    \node[main node] at (0, 0) (1)  {$r_1$};
   	    	    \node[main node] at (0, -1.5) (2)  {$r_2$};
   	    	    \node[main node] at (0, -3) (3)  {$r_3$};

   	    	    \node[main node] at (4, 0) (d1)  {$d_1$};
   	    	    \node[main node] at (4, -1.5) (d2)  {$d_2$};
   	    	    \node[main node] at (4, -3) (d3)  {$d_3$};

   	       \draw[dotted] (1) -- (d1);
   	        \draw[dotted] (2) -- (d2);
   	        \draw[dotted] (3) -- (d3);
	        
   	           \draw[thick] (1) -- (d2);
   		   \draw[dashed] (2) -- (d3);
		   
   		     \draw[dashed] (3) -- (d1);

   	    	\end{tikzpicture}
					}
   	    	\end{center}
   		\caption{A bipartite matching view of KEI. Dashed lines indicate half-compatible edges. Solid edges indicate compatibility edges. Dotted edges indicate a recipient-donor pair.}
		\label{fig:firstexample}
   	\end{figure}

\section{A General Graph Theoretic Approach}
\label{sec:model}
We construct a general bipartite matching based model capturing the most basic features of kidney exchange markets. It guarantees that each donor gives at most one kidney, each recipient receives at most one kidney, and the donor in a pair is only part of the exchange if her recipient receives a kidney. This framework gives us a set of feasible solutions for the problem. Then, by adding edge weights to the graph and finding a maximum weight matching, an optimal solution can be calculated. We specify a set of possible edge weights that can serve a large variety of goals of the decision maker, such as cost-efficiency or saving as many lives as possible.

\subsection{Matching Model}
\label{se:mm}

We build a bipartite graph to the instance $(R,D,C,H,I)$, see Figure~\ref{fi:network}. For convenience, we distinguish between recipients with and without a related donor, who will form the sets $R_2$ and $R_1$, respectively. Analogously, $D_1$ is the set of altruistic donors, while donors in $D_2$ enter the market along with their related recipient in~$R_2$. The vertices of our graph are of three types.
\begin{itemize}
	\item To each recipient $r_i \in R$, we construct a recipient vertex~$r_i$.
	\item  To each donor $d_i \in D$, we construct a donor vertex~$d_i$.
	\item We construct a dummy donor vertex $d_j$ for each recipient $r_j \in R_1$, and a dummy  recipient vertex $r_j$ for each donor $d_j \in D_1$.
\end{itemize}

If a donor-recipient pair who applies for the scheme together, then the recipient and donor are given the same index: we refer to them as $r_i$ and $d_i$ for some fixed~$i$. Recipients without a donor share the same index with their dummy donor vertex, and an analogous notation is applied for altruistic donors and their dummy counterparts. Dummy donors form the set $D_0$, while dummy recipients form the set~$R_0$.

The edges of the graph are as follows.

\begin{itemize}
	\item Each recipient $r_i \in R$ is connected to the donor with the same index $d_i \in D$ via a \emph{private edge}.
	\item Each dummy recipient $r_j \in R_0$ is connected to all donors via \emph{dummy edges}.
	\item A donor $d_i$ has a \emph{compatible} edge to a recipient $r_j$, where $i$ might be equal to $j$, if $d_i \in C_i$.
	\item A donor $d_i$ has a \emph{half-compatible} edge to a recipient $r_j$, where $i$ might be equal to $j$, if $d_i \in H_i$.
\end{itemize}

The four kinds of edges will play distinct roles when assigning weights to them. Private and dummy edges represent no transplant, while compatible and half-compatible edges stand for compatible and half-compatible transplants. 
Notice that a recipient and her donor forming a half-compatible (or compatible) pair are connected by two parallel edges, one private and one half-compatible (or compatible).

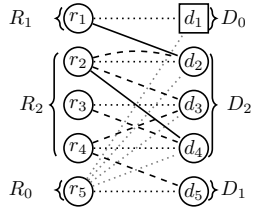
\begin{figure}[h!]
\begin{center}
\resizebox{.2\textwidth}{!}{
\begin{tikzpicture}[-,>=stealth',shorten >=1pt,auto,node distance=3cm, 	    	        thick,main node/.style={circle,fill=white!20,draw,minimum size=0.5cm,inner sep=0pt},
dummy node/.style={rectangle,fill=white!20,draw,minimum size=0.5cm,inner sep=0pt}, scale=0.5]

   	    	    \node[main node] at (0, 0) (1)  {$r_1$};
   	    	    \node[main node] at (0, -1.5) (2)  {$r_2$};
   	    	    \node[main node] at (0, -3) (3)  {$r_3$};
							\node[main node] at (0, -4.5) (4)  {$r_4$};
							\node[main node] at (0, -6) (5)  {$r_5$};

   	    	    \node[dummy node] at (4, 0) (d1)  {$d_1$};
   	    	    \node[main node] at (4, -1.5) (d2)  {$d_2$};
   	    	    \node[main node] at (4, -3) (d3)  {$d_3$};
							\node[main node] at (4, -4.5) (d4)  {$d_4$};
							\node[main node] at (4, -6) (d5)  {$d_5$};
	     
						\draw[dotted, gray] (5) -- (d1);
						\draw[dotted, gray] (5) -- (d2);
						\draw[dotted, gray] (5) -- (d3);
						\draw[dotted, gray] (5) -- (d4);
						\draw[dotted] (5) -- (d5); 
						\draw[dotted] (1) -- (d1);
   	        \draw[dotted] (2) -- (d2);
   	        \draw[dotted] (3) -- (d3);
						\draw[dotted] (4) -- (d4);
						\draw[thick] (2) -- (d4);
						\draw[dashed] (3) -- (d4);
						\draw[dashed] (4) -- (d5);
						\draw[dashed] (4) -- (d3);
						\draw[thick] (1) -- (d2);
						\draw[dashed] (2) -- (d3);
						\draw [dashed] (2) to[out=15,in=165] (d2);
				
				\draw [decorate,decoration={brace,amplitude=4pt,mirror},yshift=0pt]
(4.5,-0.4) -- (4.5,0.4) node [black,midway,xshift=0.8cm] {$D_0$};
					\draw [decorate,decoration={brace,amplitude=4pt,mirror},yshift=0pt]
(4.5,-4.8) -- (4.5,-1.2) node [black,midway,xshift=0.9cm] {$D_2$};
					\draw [decorate,decoration={brace,amplitude=4pt,mirror},yshift=0pt]
(4.5,-6.3) -- (4.5,-5.6) node [black,midway,xshift=0.8cm] {$D_1$};
\draw [decorate,decoration={brace,amplitude=4pt},yshift=0pt]
(-0.5,-0.4) -- (-0.5,0.4) node [black,midway,xshift=-0.4cm] {$R_1$};
\draw [decorate,decoration={brace,amplitude=4pt,mirror},xshift=-0.2cm]
(-0.5,-1.2) -- (-0.5,-4.8) node [black,midway,xshift=-0.8cm] {$R_2$};
\draw [decorate,decoration={brace,amplitude=4pt},yshift=0pt]
(-0.5,-6.3) -- (-0.5,-5.6) node [black,midway,xshift=-0.4cm] {$R_0$};
   	    	\end{tikzpicture}
					}
   	    	\end{center}
   		\caption{Example instance for our bipartite graph. Here, $R_1 = \left\{ r_1 \right\}$, and thus, $d_1$ is a dummy donor. The only altruistic donor is $d_5$, forming set $D_1$, and her dummy recipient is~$r_5$. Dotted black edges are private, dotted gray edges are dummy, dashed edges mark half-compatible donations, and finally, solid edges mark compatible donations.}
   		\label{fi:network}
   	    	\end{figure}

Our goal is to calculate a perfect matching in the constructed graph. A matching and the corresponding allocation are in trivial one-to-one correspondence with each other. A recipient matched along her private or dummy edge represents no transplant. The matching property ensures that each recipient in $R_1 \cup R_2$ receives one kidney at most, and each donor in $D_1 \cup D_2$ also donates one kidney at most. Since a recipient $r_i \in R_2$ is only connected to $d_i \cup C_i \cup H_i$, and we restrict our attention to perfect matchings only, $r_i$ is either satisfied or she participates in no transplant, keeping her donor~$d_i$. Perfectness thus ensures the following natural consequence of strong-IR: either a recipient uses her donor's kidney and gets a strict improvement, or she and her donor are not part of the allocation. 

To guarantee that a compatible pair only participates in a pairwise exchange or a cycle if and only if the recipient receives a compatible kidney, we only need to delete the edges running from the recipient to all half-compatible donors. This could be a natural requirement from a compatible recipient-donor pair who enter the market together---which actually often happens in practice, for example in the two largest exchange pools in Europe, in the Netherlands and in the UK~\cite{BHA19}.

\subsection{Objectives}

\begin{table}
	\centering
	\scalebox{0.64}{
		\begin{tabular}{clc c c}
		\toprule
			&\textbf{Objective function}& \textbf{compatible} & \textbf{half-compatible} & \textbf{private}\\ \midrule
      \setcounter{rownumber}{0}\refstepcounter{rownumber}\label{row:maxtr}1 & $TR$ (coarse preferences) & $1$ & $1$ & 0 \\ \midrule
			\refstepcounter{rownumber}\label{row:maxcomp}2 & $CO$ (BM, where $HC=0$) & $1$ & $-\infty$ & 0\\ \midrule
			\refstepcounter{rownumber}\label{row:maxmax}3 & $(CO,TR)$ & $N$ & $1$ & 0\\ \midrule
			\refstepcounter{rownumber}\label{row:minmax}4 & $(CO,-HC)$ & $N$ & $-1$ & 0\\ \midrule
			\refstepcounter{rownumber}\label{row:maxmax2}5 &  $(TR,-HC) \sim (TR, CO)$ & $N$ & $N-1$ & 0\\ \midrule
			\refstepcounter{rownumber}\label{row:mincost}6 & cost-optimal & compatible gain & half-compatible gain & waiting gain\\ \bottomrule
		\end{tabular}
		}
				\caption{The variety of possible weight functions serving different goals.}
				\label{ta:costs}
\end{table}

We offer a variety of different weight functions defined on the edges of our graph. Each weight function serves a justifiable goal, as we argue later.

Table~\ref{ta:costs} summarizes the options for defining the weight function on each edge $(r_i,d_j)$, depending on the type of the edge. Dummy edges always carry zero weight, therefore they are omitted from the table. We assume $N$ to be a sufficiently large integer, $n$ for example. In an allocation, we denote the number of recipients receiving a kidney from a compatible donor by $CO$, the number of recipients receiving a kidney from a half-compatible donor by $HC$, while the total number of recipients receiving a kidney by $TR=CO+HC$. Our objective functions are to be maximized in the lexicographic sense, e.g.\ $(TR,-HC)$ maximizes the number of transplants in total, and subject to this, it minimizes the number of half-compatible transplants.

Our objective functions can achieve the following.
\begin{enumerate}[noitemsep,topsep=0pt,parsep=0pt,partopsep=0pt]
	\item The number of transplants is maximized if each pair chosen for surgery contributes weight~$1$, while no transplant adds no weight to the matching.
	\item The number of transplants is maximized in the baseline model, if half-compatible donations are forbidden due to their infinitely large negative weight, and each compatible donation contributes weight~$1$.
	\item If a compatible transplant carries a  larger weight than the weight of all half-compatible transplants that can be carried out, then the main goal is to maximize the number of compatible donations. Since half-compatible donations do carry some small weight, their number will be maximized, but only subject to the first objective.
	\item Since half-compatible donations now carry a small negative weight, they are only to be planned if they enable extra compatible transplants. However, any number of half-compatible donations are welcome if they make only one more compatible donation happen, because we gain a lot in our objective function by adding $N$ just one more time to it.
	\item If compatible and half-compatible donations both carry a large weight, but the latter ones are somewhat less profitable, then the maximum number of donations will be calculated, and subject to this, as few half-compatible donations will be planned as possible.
	\item The most general version is when we set an arbitrary, possibly negative weight to each transplant. This weight can express the expected utility in terms of life expectancy, risks, healthcare savings, and it can differ for each pair. This objective is thus able to replace the trichotomous metric by a finely scaled one. Private edges represent withdrawal from donation, which can also be expressed in utilities, for example as loss due to health deterioration. On the positive side, sparing an exceptionally valuable donor in order to wait for a better match in the next round is also an entirely realistic scenario. A max weight solution corresponds to a maximum utility allocation.
\end{enumerate}
 
\begin{theorem}
\label{th:polytime_alg}
	For each of the objectives 1) to 6), there exists a strongly polynomial-time algorithm to find an allocation achieving those objectives. 
\end{theorem}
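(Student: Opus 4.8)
The plan is to reduce each of the six objectives to a single maximum-weight perfect matching computation in the bipartite graph of Figure~\ref{fi:network}, and then invoke a strongly polynomial algorithm for that matching problem. First I would record that the graph always admits a perfect matching, namely the ``no-transplant'' matching that pairs every recipient with her own private/dummy donor and every altruistic donor with its dummy recipient; hence a maximum-weight \emph{perfect} matching is always well-defined and feasible. I would then recall the bijection asserted in the excerpt between perfect matchings and strong-IR allocations: a matched real edge among a set of pairs corresponds to an exchange cycle, while a chain initiated by an altruistic donor terminates when some donor is matched to a dummy recipient. Under this bijection the number of compatible transplants $CO$ equals the number of compatible edges used by the matching and $HC$ equals the number of half-compatible edges used, whereas private and dummy edges contribute to neither count.

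Next I would show that, for each objective, the total weight of a perfect matching $M$ is a fixed linear function of the pair $(CO(M),HC(M))$ read off from Table~\ref{ta:costs}, so that maximizing weight realizes the stated goal. For the single-level objectives (rows~1, 2 and~6) this is immediate: the weight equals $CO+HC$, $CO$, or the given utility sum, respectively; for row~2 the $-\infty$ weight is implemented either by deleting all half-compatible edges or by a large negative sentinel (and in BM there are no such edges anyway). For the lexicographic objectives (rows~3, 4 and~5) the multiplier $N$ decouples the two priority levels: since $CO,HC,TR\le n$, choosing $N$ strictly larger than every attainable secondary value, e.g.\ $N=n+1$, guarantees that any unit gain in the primary quantity dominates any admissible change in the secondary one. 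I would then verify each row explicitly: $N\cdot CO+HC$ realizes $(CO,TR)$ because $TR=CO+HC$ and $CO$ is already pinned; $N\cdot CO-HC$ realizes $(CO,-HC)$; and $N\cdot CO+(N-1)HC=N\cdot TR-HC$ realizes $(TR,-HC)\sim(TR,CO)$.

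Finally I would invoke a strongly polynomial algorithm for maximum-weight perfect matching in bipartite graphs (the assignment problem), such as the Hungarian method, which runs in $O(|V|^3)$ arithmetic operations \emph{independently} of the magnitudes of the weights; this is what delivers strong polynomiality even for the arbitrary input weights of objective~6. Since the graph has $O(n)$ vertices and $O(n^2)$ edges, the overall running time is polynomial in $n$, and the optimal matching is mapped back to an allocation via the bijection. The only genuinely delicate step is the lexicographic encoding in the second paragraph---pinning down that $N$ is large enough and that each row of Table~\ref{ta:costs} induces exactly the intended lexicographic order; the existence of a perfect matching, the matching--allocation correspondence, and the appeal to the assignment algorithm are routine once that verification is in place.
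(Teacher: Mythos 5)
Your proposal is correct and follows essentially the same route as the paper: reduce every objective to a maximum-weight perfect matching in the constructed bipartite graph and invoke a strongly polynomial algorithm for weighted bipartite matching. The only cosmetic difference is how perfectness is enforced---the paper adds a uniform shift $n\cdot w_{\max}+1$ to every edge weight so that the ordinary maximum-weight matching is automatically perfect, whereas you observe that the private/dummy edges already guarantee a perfect matching exists and call a perfect-matching routine directly; your explicit verification of the lexicographic encodings (with $N=n+1$ rather than the paper's looser ``$N=n$, for example'') is a detail the paper's proof leaves to the surrounding discussion of Table~\ref{ta:costs}.
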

\begin{proof}
Our goal is to compute a maximum weight perfect matching in the graph. A maximum weight matching can be computed in strongly polynomial time~\cite{Mun57}. To take care of perfect\-ness, or equivalently, maximum size, one only needs to apply the standard weight modification~\cite{KV12}, in which each edge gets an additional uniform weight that is larger than the sum of the original weights in any matching. For this uniform addition, $n \cdot w_{\text{max}}(e) +1$ suffices.

$$w'(e) := w(e) + n \cdot w_{\text{max}}(e) +1$$

The weight of matching $M$ is thus $w'(M) := w(M) + (n \cdot w_{\text{max}}(e) +1) \cdot |M|$. Since $(n \cdot w_{\text{max}}(e) +1) \cdot |M| >  w(M)$ for any matching $M$, matchings of larger size are bound to have a larger weight as well, thus $w'(M)$ is maximized by a perfect matching. Within the set of perfect matchings, $(n \cdot w_{\text{max}}(e) +1) \cdot |M| = (n \cdot w_{\text{max}}(e) +1) \cdot n$ is identical, and thus $w'$ is maximized in the maximum weight matching according to~$w$.
\end{proof}

\section{Fixed upper bound on $HC$}
\label{se:bound}
Suppressants are highly useful to allow half-compatible kidneys to be allocated. However, they are not only extremely expensive but they also have undesirable side-effects. Given these issues, the market designer may wish to specify a fixed upper quota on $HC=h$, and wishes to maximize the number of transplants subject to~$h$. We show that even with an upper bound, we can solve the following central problem.

\pbDef{\textsc{$h$-AllKEI}}{\textsc{KEI} instance $G=(R,D,C,H,I)$ and integer $h$.}{ Is there an allocation satisfying all the recipients with at most $h$ recipients using suppressants?}

\begin{theorem}
	\textsc{$h$-AllKEI} can be solved in polynomial time even for the general model. 
	\end{theorem}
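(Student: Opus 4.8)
The plan is to reduce \textsc{$h$-AllKEI} to a single maximum-weight perfect matching computation in the bipartite graph of Section~\ref{se:mm}, reusing objective~5 of Table~\ref{ta:costs}. First I would build the graph exactly as in Section~\ref{se:mm} and assign the weights of row~\ref{row:maxmax2}: weight $N$ to every compatible edge, weight $N-1$ to every half-compatible edge, and weight $0$ to all private and dummy edges, with $N$ a sufficiently large integer. By Theorem~\ref{th:polytime_alg} a maximum-weight perfect matching for these weights is computable in strongly polynomial time, and the corresponding allocation $A^\ast$ maximizes $TR$ and, subject to that, minimizes $HC$ in the lexicographic sense $(TR,-HC)$.

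Then I would prove the equivalence: \textsc{$h$-AllKEI} is a \textsc{yes}-instance if and only if $TR(A^\ast)=n$ and $HC(A^\ast)\le h$. An allocation satisfies all recipients precisely when $TR=n$; since $A^\ast$ attains the maximum possible $TR$, the existence of \emph{any} all-satisfying allocation forces $TR(A^\ast)=n$, and among all-satisfying allocations $A^\ast$ uses the fewest suppressants. Hence if some allocation satisfies everyone with at most $h$ suppressants then $TR(A^\ast)=n$ and $HC(A^\ast)\le h$; conversely, whenever $TR(A^\ast)=n$ and $HC(A^\ast)\le h$ the allocation $A^\ast$ is itself a witness. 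The remaining case $TR(A^\ast)<n$ means that no allocation satisfies all recipients, so the instance is \textsc{no}.

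The conceptual crux, rather than any heavy calculation, is to notice that the budget ``at most $h$'' need not be encoded as a hard cardinality constraint on the half-compatible edges---exactly the encoding that makes the superficially similar \emph{maximization} variant difficult. Because \textsc{$h$-AllKEI} insists that every recipient be satisfied, the problem decouples: minimize the scarce resource (suppressants, i.e.\ the half-compatible edges) subject to full coverage, and only then compare this minimum against $h$. Monotonicity makes the comparison sound, since every all-satisfying allocation uses at least $HC(A^\ast)$ suppressants, so feasibility under budget $h$ is equivalent to $HC(A^\ast)\le h$. The running time is that of one matching computation from Theorem~\ref{th:polytime_alg} plus an $O(1)$ threshold test, which yields the claimed polynomial bound, and it holds for the general model precisely because Theorem~\ref{th:polytime_alg} and its underlying matching framework already do.
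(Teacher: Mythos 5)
Your proposal is correct and matches the paper's approach in essence: both reduce \textsc{$h$-AllKEI} to a single maximum-weight perfect matching computation whose optimum minimizes the number of half-compatible edges subject to satisfying every recipient, and then compare that minimum against~$h$. The only (immaterial) difference is the encoding---the paper deletes the private edges so that perfect matchings coincide with all-satisfying allocations and maximizes the count of compatible edges, whereas you keep the private edges and let the lexicographic $(TR,-HC)$ weights of Table~\ref{ta:costs} enforce full coverage, verifying $TR(A^\ast)=n$ a posteriori.
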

	\begin{proof}
		Construct the corresponding graph as defined in section on the matching model, with vertex sets $R$ and~$D$. 
		For the edges, here we only keep the edges of vertices in $R_0$, compatible, and half-compatible edges. In particular, we delete the private edges between same-index couples in $(R_1 \cup R_2) \times (D_0 \cup D_2)$.
		
		We want to check whether there exists an allocation such that every single recipient in $R_1 \cup R_2$ gets either a compatible or a half-compatible kidney, and at most $h$ of them receives a half-compatible kidney. In graph-theoretic terms, this question translates to deciding whether a perfect matching $M$ exists in the constructed graph, so that $M$ contains at most $h$ edges from the special edge set $E'$ of half-compatible edges. This question can be answered by solving a simple weighted perfect matching problem. In the reduced graph, we give each compatible edge weight 1, and to all other edges, weight~0. 
		The weight of any perfect matching $M$ is $n - |M\cap E'|$. The maximum weight matching in this instance has weight at least $n - h$ if and only if there is an allocation using at most $h$ suppressants.
		\end{proof}
		
		%
		%
		
Our next problem, \textsc{$h$-MaxKEI} is a more general version of \textsc{$h$-AllKEI}: 
	\pbDef{\textsc{$h$-MaxKEI}}{KEI instance $(R,D,C,H,I)$ and integers $t$ and $h$.}{Is there an allocation giving at least $t$ recipients a compatible donor with at most $h$ recipients using suppressants?}
	
	Our next result is that \textsc{$h$-MaxKEI} can be solved in polynomial time in the Silver Bullet model.
	
\begin{theorem}\label{prop:maxkei}
	\textsc{$h$-MaxKEI} can be solved in polynomial time in the Silver Bullet model.
	\end{theorem}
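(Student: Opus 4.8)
The plan is to reduce the problem, in the Silver Bullet model, to a single maximum-flow computation that maximizes the number of satisfied recipients while capping the number of suppressants at $h$, and then to answer \textsc{$h$-MaxKEI} (which is exactly \textsc{$h$-AllKEI} with the target relaxed from ``all'' to ``at least $t$'') by comparing the optimum against $t$. I would start from the bipartite matching model of Section~\ref{se:mm}, which already encodes strong individual rationality through its private and dummy edges: a valid allocation is a perfect matching, its compatible and half-compatible edges are exactly the transplants, and the coupling ``a donor donates only if her recipient is satisfied'' is built in. Because that graph is bipartite, its matchings correspond to integral flows, so the whole question can be recast in the language of flows.

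The crucial structural feature I would exploit is that in the Silver Bullet model $I_i=\emptyset$, so a recipient who receives a suppressant is half-compatible with \emph{every} donor. Hence all half-compatible edges form a complete bipartite pattern, and I can route them through a single bottleneck. Concretely, I would build a network with a source $s$ feeding each recipient (capacity $1$) and each donor feeding a sink (capacity $1$); compatible matches are kept as direct recipient-to-donor arcs, while every potential suppressant match is sent through one common \emph{gate} vertex $g$ of capacity $h$, with arcs $r_i \to g$ and $g \to d_j$. An integral flow decomposes into paths that each use one recipient and one donor, so the unit capacities force a legal matching, while the capacity $h$ on $g$ forces at most $h$ suppressants. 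Maximizing the flow, with the weighting idea of Theorem~\ref{th:polytime_alg} used to prioritize compatible arcs and to impose the strong-IR perfectness, then computes the maximum number of transplants achievable with at most $h$ suppressants, and the instance is a yes-instance exactly when this value is $\ge t$.

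The step I expect to be the main obstacle is justifying that the single capacity-$h$ gate faithfully models the budget ``at most $h$ suppressants.'' Flow is fungible at $g$, so a unit entering from $r_i$ may leave to an arbitrary donor $d_j$; I must argue this is harmless precisely because, under the Silver Bullet assumption, a suppressed recipient accepts \emph{any} donor, so every reassignment at the gate still yields a feasible allocation. This is also where the contrast with the general model surfaces: there the neighbourhoods $H_i$ are arbitrary, fungibility at the gate would permit infeasible recipient-donor pairings, and the budgeted matching no longer collapses to a flow, which is consistent with the general-model variant reducing to the long-open matching problem mentioned in the introduction. A secondary point to verify is that the gate interacts correctly with the private and dummy edges, i.e.\ that freeing a donor through a gated match never leaves her paired recipient unsatisfied in the extracted allocation; this should follow from perfectness of the matching underlying the optimal flow.
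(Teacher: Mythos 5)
Your proposal is correct and is essentially the paper's own argument translated into flow language: your capacity-$h$ gate vertex $g$ plays exactly the role of the paper's gadget of $h$ disjoint edges $(a_1,b_1),\ldots,(a_h,b_h)$ through which every half-compatible match must be routed, and your key justification --- that fungibility at the bottleneck is harmless because $I_i=\emptyset$ makes any reassignment of suppressed recipients to donors feasible --- is precisely the central claim of the paper's proof. The remaining details (lifting the weights to force perfectness/strong-IR and to make compatible edges preferable to routes through the bottleneck) are handled the same way in both arguments.
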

	\begin{proof}
If we have a model that excludes incompatibility, as the Silver Bullet model, then a  modification of the constructed graph solves this problem. We assume that each recipient in $R_1 \cup R_2$ is connected to each donor in $D_1 \cup D_2$ either via a half-compatible or via a compatible edge. Besides these edges, private and dummy edges are also present.

The modification of the graph is as follows. The goal is to decompose each half-compatible edge into a set of paths, and then lead these paths through a gadget that will regulate the maximum number of used half-compatible edges through its size. First we add this gadget, which consists of $2h$ new vertices in sets $A$ and $B$, and a set of $h$ disjoint edges of weight~0 between them: $\left\{(a_1,b_1), (a_2, b_2) \ldots, (a_h, b_h) \right\}$. Then we replace each edge $(r_i,d_j)$ in the half-compatible class by a set of edges connecting $r_i$ to each of $a_1, a_2, \ldots, a_h$, and $d_j$ to each of $b_1, b_2, \ldots, b_h$. The weight on these edges are set to be half of the original weight of the replaced half-compatible edges. The rest of the graph remains unchanged. Notice that vertex sets $R \cup B$ and $D \cup A$ build a bipartition of the new graph.

For an example, see Figure~\ref{fi:bound_on_HC}. The instance originates from our earlier example instance from Figure~\ref{fi:network}, with $h=2$. The difference from that instance is that while $(r_1,d_2)$ and $(r_2,d_4)$ are compatible edges as before, all other recipient-donor pairs are half-compatible unless the recipient or the donor is a dummy, so that the input suits the Silver Bullet model. Figure~\ref{fi:bound_on_HC} depicts the graph after vertex sets $A, B$, and the gadget on them are added to it.

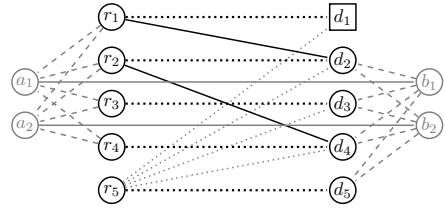
\begin{figure}[t]
\begin{center}
\resizebox{.33\textwidth}{!}{
\begin{tikzpicture}[-,>=stealth',shorten >=1pt,auto,node distance=3cm, 	    	        thick,
main node/.style={circle,fill=white!20,draw,minimum size=0.5cm,inner sep=0pt},
dummy node/.style={rectangle,fill=white!20,draw,minimum size=0.5cm,inner sep=0pt},
gadget node/.style={circle,gray,fill=white!20,draw,minimum size=0.5cm,inner sep=0pt},scale=0.55]

   	    	    \node[main node] at (0, 0) (1)  {$r_1$};
   	    	    \node[main node] at (0, -1.5) (2)  {$r_2$};
   	    	    \node[main node] at (0, -3) (3)  {$r_3$};
							\node[main node] at (0, -4.5) (4)  {$r_4$};
							\node[main node] at (0, -6) (5)  {$r_5$};

   	    	    \node[dummy node] at (8, 0) (d1)  {$d_1$};
   	    	    \node[main node] at (8, -1.5) (d2)  {$d_2$};
   	    	    \node[main node] at (8, -3) (d3)  {$d_3$};
							\node[main node] at (8, -4.5) (d4)  {$d_4$};
							\node[main node] at (8, -6) (d5)  {$d_5$};
	     
   	      \draw[dotted, gray] (5) -- (d1);
					\draw[dotted, gray] (5) -- (d2);
					\draw[dotted, gray] (5) -- (d3);
					\draw[dotted, gray] (5) -- (d4);
					\draw[dotted, very thick] (5) -- (d5);
					\draw[dotted, very thick] (1) -- (d1);
   	        \draw[dotted, very thick] (2) -- (d2);
   	        \draw[dotted, very thick] (3) -- (d3);
						\draw[dotted, very thick] (4) -- (d4);
						\draw[thick] (2) -- (d4);
   	        \draw[thick] (1) -- (d2);
   		    \node[gadget node] at (-3, -2.25) (a1)  {$a_1$};
					\node[gadget node] at (-3, -3.75) (a2)  {$a_2$};
	       \node[gadget node] at (11, -2.25) (b1)  {$b_1$};
					\node[gadget node] at (11, -3.75) (b2)  {$b_2$};
					\draw[thick, gray] (a1) -- (b1);
					\draw[thick, gray] (a2) -- (b2);
					\draw[dashed, gray] (1) -- (a1);
					\draw[dashed, gray] (2) -- (a1);
					\draw[dashed, gray] (3) -- (a1);
					\draw[dashed, gray] (4) -- (a1);
					\draw[dashed, gray] (1) -- (a2);
					\draw[dashed, gray] (2) -- (a2);
					\draw[dashed, gray] (3) -- (a2);
					\draw[dashed, gray] (4) -- (a2);
					\draw[dashed, gray] (b1) -- (d2);
					\draw[dashed, gray] (b1) -- (d3);
					\draw[dashed, gray] (b1) -- (d4);
					\draw[dashed, gray] (b1) -- (d5);
					\draw[dashed, gray] (b2) -- (d2);
					\draw[dashed, gray] (b2) -- (d3);
					\draw[dashed, gray] (b2) -- (d4);
					\draw[dashed, gray] (b2) -- (d5);
   	    	\end{tikzpicture}
					}
   	    	\end{center}
   		\caption{Substituting half-compatible edges by a gadget allowing at most 2 half-compatible donations. Compatible edges $(r_1,d_2)$, $(r_2,d_4)$, and the dotted private/dummy edges remain intact.}
   		\label{fi:bound_on_HC}
   	    	\end{figure}

\begin{claim}
A maximum weight perfect matching in the above-described graph corresponds to an allocation maximizing the weight subject to $HC \leq h$, if half-compatible donations are less desirable than compatible donations according to the weight function.
\end{claim}
\noindent\textit{\proofname.} Due to the size of the gadget, no perfect matching allows more than $h$ vertices in $R$ to be matched along their edges to the gadget. Moreover, the number of vertices in $R$ that are matched to a vertex in $A$ equals the number of vertices in $D$ that are matched to a vertex in $B$, because a perfect matching covers all vertices in the gadget. These vertices in $R$ and $D$ 
 will be the agents participating in half-compatible donations. Due to the assumptions of the Silver Bullet model, any perfect matching on them is a set of executable transplants. The rest of the transplants are chosen based on the maximum weight matching criterion.

Notice that it is possible that a donor in $D_1 \cup D_2$ and a recipient in $R_1 \cup R_2$ are connected via a 3-path through the gadget and via a direct compatible edge as well, but for all weight functions where half-compatible donations are less desirable than compatible donations (all our weight functions except for~\ref{row:maxtr} and possibly~\ref{row:mincost}), the path will carry the lower weight. Therefore, no perfect matching using such edges in the gadget can be of maximum weight.
\qedhere\qedsymbol
	\end{proof}
	
	This construction in the proof of Theorem~\ref{prop:maxkei} answers a question more general  than \textsc{$h$-MaxKEI}. It actually decides whether there is an allocation of weight at least $t$ while using at most $h$ suppressants.

	\begin{theorem}\label{prop:$h$-MaxKEI}
		In the Silver Bullet model, a maximum weight strong-IR allocation can be computed in polynomial-time even if there is an upper bound on the number of suppressants that can be used.
		\end{theorem}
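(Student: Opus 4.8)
The plan is to \emph{compose} two constructions already in hand: the gadget graph from the proof of Theorem~\ref{prop:maxkei}, which enforces the quota $HC \le h$, and the uniform weight shift from Theorem~\ref{th:polytime_alg}, which converts ``maximum size / perfect'' into ``maximum weight'' and thereby encodes strong-IR. Concretely, I would start from the Silver Bullet instance and build the gadget-augmented bipartite graph on $R \cup D \cup A \cup B$ exactly as in Theorem~\ref{prop:maxkei}: retain the private and dummy edges of the matching model, keep every compatible edge, and replace each half-compatible edge by its two gadget half-edges through some $(a_k,b_k)$. I then place the target weight function on it, with compatible edges carrying their weight, private and dummy edges carrying their objective weight (zero, or the waiting gain of objective~\ref{row:mincost}), and each half-compatible weight split evenly across its two gadget half-edges. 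Because the gadget keeps the private edges intact, every perfect matching of this graph induces a strong-IR allocation using at most $h$ suppressants, and conversely every such allocation is induced by some perfect matching; so the whole problem reduces to finding a maximum-weight perfect matching.

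The second step is to force perfectness by the shift $w'(e) := w(e) + n\cdot w_{\max} + 1$ of Theorem~\ref{th:polytime_alg}. Here I would use the observation that, once $h$ is fixed, the vertex set of the gadget graph is fixed, so every perfect matching contains the same number of edges (half the vertex count). Hence the shift contributes an identical constant to all perfect matchings while strictly dominating every non-perfect one; the $w'$-optimal matching is therefore perfect, and among perfect matchings it maximizes the original weight $w$. That matching is exactly a maximum-weight strong-IR allocation subject to $HC \le h$. Since a maximum-weight matching is computable in strongly polynomial time~\cite{Mun57}, so is the entire procedure.

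The step I expect to be the main obstacle is the \emph{weight-splitting correspondence}, namely that the value of a half-compatible donation is faithfully transported when the donation is routed through the gadget. Two points need care. First, a half-compatible donation now occupies two matching edges rather than one; the edge-count invariance above should show that this does not distort the maximum-size / strong-IR accounting, and I expect this to go through cleanly. Second, and more delicate, is the degeneracy already flagged for Theorem~\ref{prop:maxkei}: a compatible donor--recipient pair is also joined by a three-edge path through the gadget, so if a half-compatible route carried strictly higher weight, the optimizer could realize a genuinely compatible match as a half-compatible one, wasting a suppressant slot and corrupting the $HC$ count. As in the Claim of Theorem~\ref{prop:maxkei}, this is ruled out precisely when half-compatible donations are weighted strictly below compatible ones (true for objectives~\ref{row:maxcomp}--\ref{row:maxmax2}), and the even split presumes the half-compatible value is separable per recipient and per donor rather than an arbitrary pairwise number. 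I would therefore establish the correspondence, and state the theorem, for weight functions meeting these mild, objective-consistent conditions; verifying that the reduction is weight-preserving under them is where the real work lies.
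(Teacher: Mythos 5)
Your proposal matches the paper's own treatment: the paper states this theorem as a direct consequence of the gadget construction from Theorem~\ref{prop:maxkei} combined with the maximum-weight perfect-matching machinery of Theorem~\ref{th:polytime_alg}, which is exactly the composition you describe. The caveats you flag --- the three-edge compatible-pair detour through the gadget, the need for half-compatible donations to be weighted strictly below compatible ones, and the separability assumption behind the even weight split --- are the same restrictions the paper acknowledges in the Claim inside the proof of Theorem~\ref{prop:maxkei}, so your argument is correct and essentially identical to the intended one.
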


Next, we identify connections of \textsc{$h$-MaxKEI} with a special case of budgeted matching, a well-studied graph problem of unknown complexity.

		\pbDef{{\textsc{Unit-Cost Budgeted Matching}}}{Bipartite graph $G=(A \cup B, E)$, $E' \subseteq E$, edge weights, and integers $h$ and~$t$.}{Is there a maximum weight matching $M$ of weight at least $t$ such that $|M\cap E'| \leq h$?}
		
\textsc{Unit-Cost Budgeted Matching} is a restricted variant of \textsc{Budgeted Matching}, where in addition to the edge weights, edge costs $c(e)$ are also present, and the budget $|M\cap E'| \leq h$ is substituted by $c(M) \leq h$. If costs are 0 or 1, then \textsc{Budgeted Matching} is identical to \textsc{Unit-Cost Budgeted Matching}, where $E'$ is the set of edges with cost~1.

\textsc{Unit-Cost Budgeted Matching} admits a PTAS~\cite{BBGS11,MS12}. \citet{BBGS11} observe that for polynomial weights and costs (here we set the costs to be 1), \textsc{Budgeted Matching} is very unlikely to be NP-hard, because it would imply RP=NP. However, after several decades, the problem of finding a deterministic algorithm to solve this problem is still open. We now show how \textsc{$h$-MaxKEI} reduces to \textsc{Unit-Cost Budgeted Matching}. 

\begin{mylemma}
	\textsc{$h$-MaxKEI} polynomial-time reduces to \textsc{Unit-Cost Budgeted Matching}. 
\end{mylemma}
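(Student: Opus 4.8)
The plan is to reduce through the bipartite matching model of Section~\ref{se:mm}, taking the target \textsc{Unit-Cost Budgeted Matching} instance to be essentially the graph $G=(A\cup B,E)$ that this model already associates with $(R,D,C,H,I)$: the side $A$ consists of the real recipients together with the dummy recipients $R_0$, the side $B$ of the real donors together with the dummy donors $D_0$, and $E$ of the private, dummy, compatible, and half-compatible edges. First I would let $E'$ be the class of half-compatible edges, so that for every matching $M$ the count $|M\cap E'|$ is exactly the number of administered suppressants. To encode the requirement ``at least $t$ compatible donations,'' I would give each compatible edge base weight $1$ and every other edge base weight $0$; then the base weight of any matching equals its number $CO$ of compatible transplants.

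The second step is to force a budget-feasible matching to be \emph{perfect}, which is what makes it an allocation respecting strong-IR. To this end I would add to every edge the same uniform shift $W$ used in the proof of Theorem~\ref{th:polytime_alg}, chosen larger than the total base weight attainable by any matching. As argued there, every perfect matching then outweighs every non-perfect one, and the shifted weight of a perfect matching $M$ is $CO + p\,W$, where $p=|A|=|B|$ is the common size of the two sides (a perfect matching always exists, via the private and dummy edges). I would therefore query \textsc{Unit-Cost Budgeted Matching} with the unchanged budget $h$ and the threshold $t' := t + p\,W$. The offset $p\,W$ is a fixed, polynomially computable constant, and by construction a matching attains weight $t'$ only if it is both perfect and has $CO\ge t$.

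Correctness then reduces to the two expected directions. Given a KEI allocation with at least $t$ compatible donations and at most $h$ suppressants, its corresponding perfect matching has shifted weight $CO + p\,W \ge t'$ and obeys $|M\cap E'|\le h$. Conversely, any matching of weight at least $t'$ is forced to be perfect by the shift, hence corresponds through the one-to-one map of Section~\ref{se:mm} to a strong-IR allocation whose compatible-donation count satisfies $CO\ge t$ and whose suppressant count satisfies $|M\cap E'|\le h$. Building $G$, the weights, and $t'$ is clearly polynomial, so the reduction runs in polynomial time.

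The construction is short, so the step I expect to require the most care---rather than being technically deep---is the bookkeeping that couples the weight threshold with the perfectness enforcement: $W$ must dominate all base weights simultaneously, and the offset in $t'$ must be exactly $p\,W$, so that ``weight $\ge t'$'' is equivalent to ``perfect matching with $CO\ge t$'' and not merely to a heavy but imperfect matching. Because half-compatible, private, and dummy edges all receive base weight $0$, only compatible edges move the weight, and this is precisely what lets the budget constraint $|M\cap E'|\le h$ isolate the suppressant count independently of the weight objective.
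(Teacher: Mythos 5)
Your proposal is correct and takes essentially the same route as the paper: the same bipartite graph from the matching model, $E'$ set to the half-compatible edges, the budget $h$ carried over unchanged, and a uniform additive shift of the edge weights (the paper uses $w'(e):=w(e)+n$) to force the budget-feasible optimum to be a perfect matching. You are in fact slightly more explicit than the paper in fixing the base weights (compatible $=1$, all else $0$) and in stating the shifted threshold $t'=t+pW$, which the paper leaves implicit.
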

	\begin{proof}
We set $E'$ to be the set of half-compatible edges, while $G$ and the upper bound $h$ are identical in the two problems. To make sure that the maximum weight matching in \textsc{Unit-Cost Budgeted Matching} is perfect, we modify the edge weights $w(e)$ from \textsc{$h$-MaxKEI} in an analogous manner to our method in the proof of Theorem~\ref{th:polytime_alg}:
$w'(e) := w(e) + n.$

The weight of matching $M$ is thus $w'(M) := w(M) + n |M|$. For $w(e) \leq 1$, matchings of larger size have a larger weight as well, thus $w'(M)$ is maximized by a perfect matching. Within the set of perfect matchings, $w'$ is maximized in the maximum weight matching according to~$w$.
\end{proof}

Regarding parametrized complexity, our trivial parameter is $h$, the number of suppressants available. If $h$ is small, then one can try which $h$ half-compatible edges are used, and then search for a maximum weight allocation in the rest of the instance built out of compatible and private edges only.

\section{Restrictions on the Exchange Cycle Length}
\label{se:bounded}

In kidney exchange, the length of the exchange cycles is typically required to be small for logistical reasons and to reduce the risk of a cycle being disrupted if someone backs out of the exchange. In this section, we focus our attention to short exchange cycles. 


Maximizing kidney exchange under the restriction on the size of the exchange cycles is NP-hard~\cite{ABS07a,BMR09a}. A practical approach to solving the problem involves formulating it as an ILP (Integer Linear Program).\footnote{Additionally, we address the case of \emph{pairwise exchange} in the supplemental material, and give a polynomial-time clearing algorithm for just that special case.}  We present an ILP based on PICEF~\citep{Dickerson16:Position}, given in (\ref{eq:ilp}) below.

First we construct the graph to the instance as described in the section on the matching model. Edges are equipped with the edge weight $w(e)$ serving any chosen objective in Table~\ref{ta:costs}.  To construct the corresponding ILP, we create the following binary variables:
\begin{itemize}
    \item $y_{ek}$: $1$ if edge $e$ is matched at position $k$ in a chain, and $0$ otherwise
    \item $z_c$: $1$ if cycle $c$ is matched and $0$ otherwise
    \item $u_e$: $1$ if edge $e$ is matched and $0$ otherwise (not part of the original PICEF model).
\end{itemize}

Following this, we define additional parameters (aligning with those described earlier in the paper, as well as new formulation-specific parameters):
\begin{itemize}
    \item $E, P, N$: the set of edges, patient-donor pair vertices, and NDD vertices
    \item $H\subseteq E$: the set of half-compatible edges
    \item $h\in \mathbb Z_+$: the maximum number of immunosuppressants
    \item $w : E \to \mathbb R$: the edge weight for edge $e$
    \item $\mathcal K(e)\subseteq \{1, \dots, L\}$: the set of \emph{positions} that edge $e$ can take in a chain, where $K$ is the maximum chain length
    \item $C$: the set of feasible cycles (up to length $D$). With some abuse of notation, we denote membership in a cycle using ``$\in$'' for both edges and vertices: e.g., if edge $e$ is used in cycle $c$ then $e\in c$; if vertex $i$ participates in $c$, then $i \in c$.
\end{itemize}

Finally, we construct ILP~(\ref{eq:ilp}) below as follows.

{\small
\begin{equation}
    \begin{array}{rll} 
    \max & \sum\limits_{e\in E} u_e w(e)\\
    & \\
    & \sum\limits_{e\in \delta^-(i)}  \sum\limits_{k\in\mathcal K(e)} y_{ek} + \sum\limits_{\mathclap{{\tiny \begin{array}{c} c\in C: \\ i\in c\end{array}}}} z_c \leq 1 & \forall i\in P\\
    & \sum\limits_{{\tiny \begin{array}{c} e\in \delta^-(i)\land\\ k\in\mathcal K(e) \end{array}}}  y_{ek} \geq \sum\limits_{e\in \delta^+(i)} y_{e,k+1} &{\arraycolsep=0pt \begin{array}{l} \forall i\in P, \\ k\in \{1,...,L-1\} \end{array}}\\
    &\sum\limits_{e\in \delta^+(i)} y_{e1} \leq 1 & \forall i \in N \\
    & u_e = \sum\limits_{k \in \mathcal K(e)} y_{ek} + \sum\limits_{\tiny {\begin{array}{c} c\in C: \\ e\in c\end{array}}} & \forall e\in E\\
    & \sum\limits_{e\in H} u_e \leq h \\
    & y_{ek} \in \{0,1\} & \forall e\in E, k\in \mathcal K(e)\\
     &z_c \in \{0,1\} & \forall c\in C \\
     & u_e \in \{0, 1\} & \forall e\in E
     \end{array} \label{eq:ilp}
 \end{equation}
 }
The final two constraints are not part of the original PICEF model: the first new constraint defines variables $u_e$, which is $1$ if edge $e$ is matched; the second new constraint requires that at most $h$ half-compatible edges are matched.

This ILP model is powerful, because it can deal with bounded cycle length and a budget on the number of suppressants at the same time. Even though it does not provide a polynomial method to solve the problem (since such an algorithm cannot exist unless P=NP), ILP formulations have proved to work well in practical scenarios~\cite{CKVR13,BHA19,BKM+19}.

\section{Experimental Results}\label{sec:experiments}
In this section, we demonstrate the utility of immunosuppressants, with computational experiments on simulated kidney exchange graphs generated using data from the United Network for Organ Sharing (UNOS). For each UNOS graph, we begin with all vertices $V$ and fully-compatible edges $E_F$. Then we add new half-compatible edges by enumerating every blood-type-compatible pair of vertices that are \emph{not} already connected; we randomly create edges between fraction $\alpha\in [0, 1]$ of these pairs. Let these half-compatible edges be denoted by $E_H$; they can be matched only with an immunosuppressant. We denote the full set of edges as $E = E_F \cup E_H$. All edges have weight $1$.

For each graph we find the optimal matching by solving ILP~(\ref{eq:ilp}) with a \emph{budget} of $h \in \{0,\dots, 100\}$ suppressants.

\noindent\textbf{Results.}
%
%
For each immunosuppressant budget $h\in \{0, \ldots, 100\}$, we find the optimal matching by solving Problem~\ref{eq:ilp}; then, let $M_h$ denote the matching weight (objective value) of this optimal matching using at most $h$ suppressants.  Then, for each graph and each $h \in \{1, \ldots, 100\}$ we calculate
$\%Baseline \equiv 100 \times \frac{M_h - M_0}{M_0}.$
In other words, $\%OPT_h$ is the percentage-difference between the matching weight with budget $h$, and with budget $0$ (no half-compatible edges).
Figure~\ref{fig:lineplot} shows $\%Baseline$ for each set of random graphs, and for $\alpha \in \{0.05, 0.1, 0.2\}$.  Figure~\ref{fig:median-plot} shows the median percentage of each patient type matched.  Additional figures in the supplemental material give further information about the spread of results over all the simulated runs.

\begin{figure}[h!]
    \centering
    \includegraphics[width=0.75\linewidth]{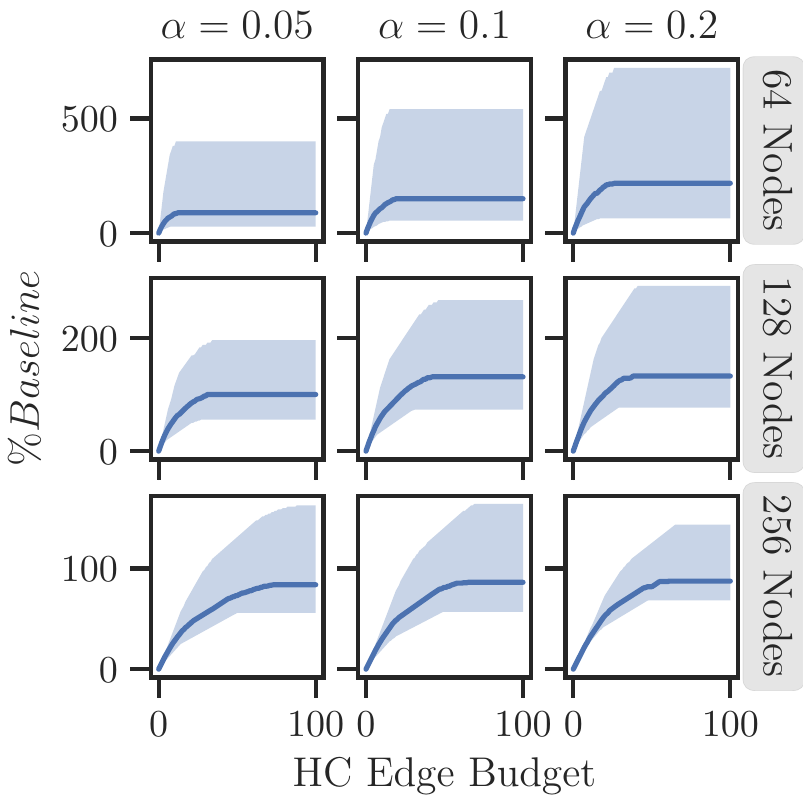}
    \caption{Median $\%Baseline$ for each set of graphs (top: 64-node graphs, middle: 128-node graphs, bottom: 256-node graphs), and each $\alpha\in \{0.05, 0.1, 0.2\}$.
    Shading is between the min and max values of $\%Baseline$.}
    \label{fig:lineplot}
\end{figure}

Figure~\ref{fig:lineplot} shows the immediate benefit of matching half-compatible edges.  Unsurprisingly, increasing the budget $h$ results in diminishing marginal returns; the greatest marginal benefit comes from the first 10 edges. For the small- and medium-sized (i.e., $64$ and $128$-node) graphs, that relatively small budget nearly \emph{doubles} the match size (weight); for the largest size (i.e., $256$-node) graphs, that relative gain is 50\% more---still a substantial gain.  There is also a trailing off effect such that, given enough immunosuppressant budget, no additional gain can be achieved.

\begin{figure}[h!]
    \centering
    \includegraphics[width=0.85\linewidth]{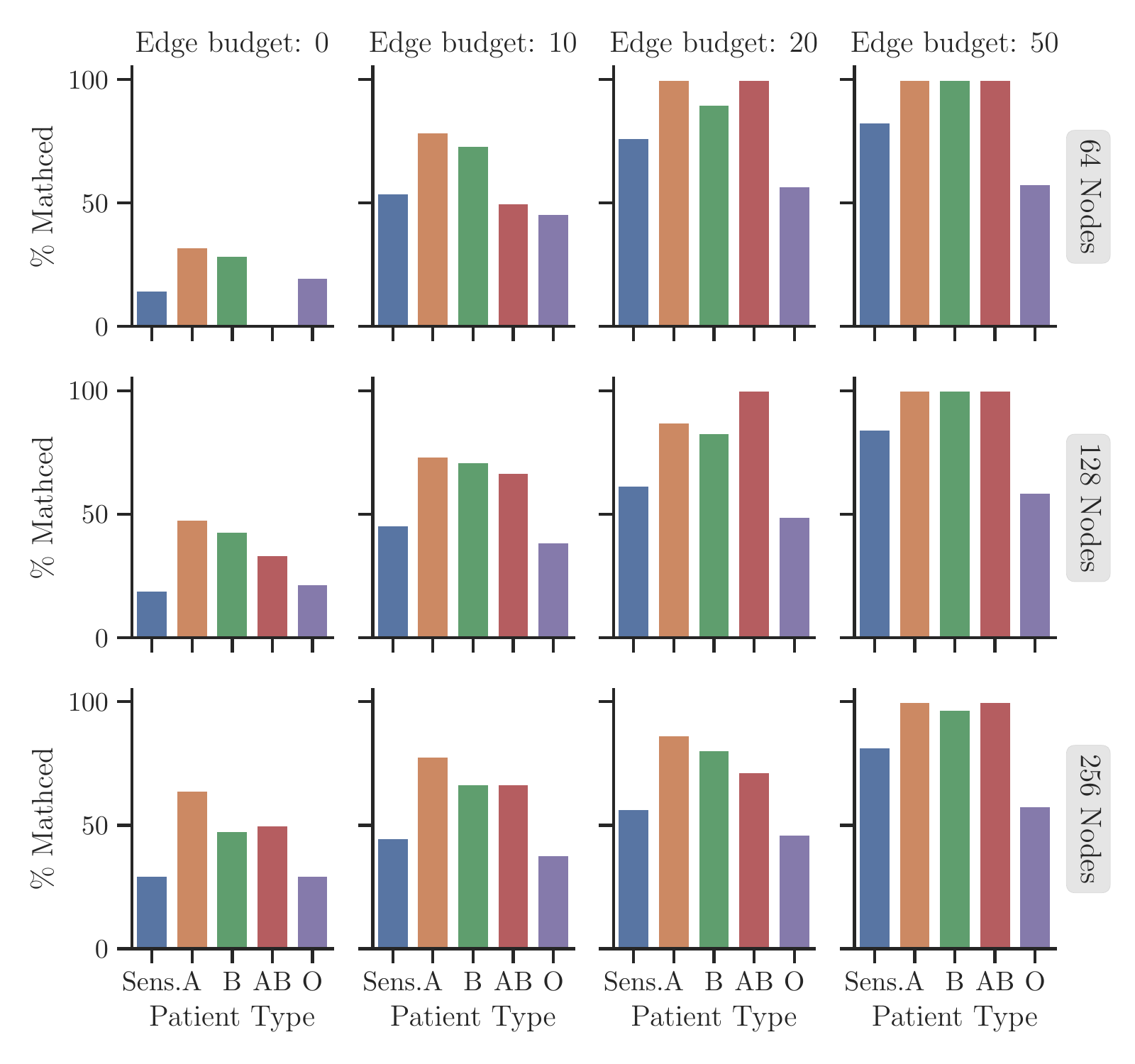}
    \caption{Median percentage of each patient type: highly-sensitized (Sens.), and blood type (A, B, AB, O), for each set of random graphs (top: 64-node graphs, middle: 128-node graphs, bottom: 256-node graphs), and $\alpha=0.2$.
    Each column shows a different edge budget (0, 10, 20, 50).}
    \label{fig:median-plot}
\end{figure}

Recall that we are only able to ``activate'' potential edges between blood-type-compatible vertices; thus, many pairs of vertices may \emph{never} be connected directly (e.g., O-type patients and AB-type donors), and graph structure may prevent vertices from ever being matchable at all.  Figure~\ref{fig:median-plot} shows this behavior: relatively more of the ``easier-to-match'' blood types (AB, A, and B) are matched than the O-type patients, i.e., those with the hardest-to-match blood type.  Still, Figure~\ref{fig:median-plot} shows that in aggregate patients of each blood type are helped---again, exhibiting diminishing marginal returns as immunosuppressant budget increases. 

Real-world kidney exchange pools range in size from a few dozen patient-donor pairs and altruistic donors---either at individual transplant centers or in burgeoning but still-nascent multi-center programs---to a few hundred in larger exchanges in the US, UK, and (soon) multinational exchanges.  Our experimental results support that application of even a small number of suppressants results in large gains on realistic kidney exchange graphs of varying, realistic size.



\clearpage
\section*{Ethical Implications of our Work}
Kidney exchanges save lives and are broadly viewed as beneficial to humanity; however, as in many resource-constrained settings, decision-makers must make morally-laden decisions when designing the objective functions, constraints, and other modeling concerns that increasingly run modern exchange programs.  The economics, AI, operations research, bioethics, medical, and legal communities have long discussed the moral implications of different approaches to the allocation of organs~\citep[see, e.g.,][]{Cohen89:Increasing}, including kidney exchanges~\citep[see, e.g.,][]{Ross97:Ethics,Minerva19:Ethics,Torres19:Bi-organ}.  Broadly speaking, our proposed work falls into the category of creating a more general, and thus potentially more powerful, model for the exchange of organs, and thus may come with many of the same positive and negative potential ethical impacts.  Positives are clear: those who could not previously receive a kidney may now be afforded that opportunity, and those who would have been matched to a relative worse kidney donor are now afforded the opportunity to match to a relatively better one.  Specific to our model, though, is the potential ethical implication of applying a suppressant to one patient so that another patient---matched elsewhere in a cycle or chain---might receive a kidney.  There is a cost---both monetary and in terms of quality of health---to immunosuppression; thus, an open and morally-laden question lies in determining the tradeoffs between, and level of agency given to, participants in exchanges that run immunosuppression schemes.  As in many such scenarios, there is no ``globally correct'' answer, but rather only an answer that can be arrived at after careful consideration by stakeholders: patients, donors, doctors, ethicists, lawyers, and possibly others.  We do not prescribe a specific solution here, but rather note that our model is general and could, with input from domain experts, be augmented to address some of these concerns.

\noindent\textbf{Acknowledgements.}   
Cseh was supported by the Hungarian Academy of Sciences under its Momentum Programme (LP2016-3/2020), OTKA grant K128611, and COST Action CA16228 European Network for Game Theory. Dickerson and McElfresh were supported in part by NSF CAREER Award IIS-1846237, NSF Award CCF-1852352, NSF D-ISN Award \#2039862, NIST MSE Award \#20126334, NIH R01 Award NLM-013039-01, DARPA GARD Award \#HR00112020007, DoD WHS Award \#HQ003420F0035, DARPA Disruptioneering Award (SI3-CMD) \#S4761, and a Google Faculty Research Award. 
\bibliography{abbshort,mybib,group}


\end{document}